\newtheorem{thm}{Theorem}[section]
\newtheorem{lemma}[thm]{Lemma}
\newtheorem{remark}[thm]{Remark}
\newtheorem{fact}[thm]{Fact}
\Crefname{lemma}{Lemma}{Lemmas}
\Crefname{fact}{Fact}{Facts}
\newcommand\E{\mathop{\mathbb{E}}}
\newcommand\card[1]{\left| {#1} \right|}
\newcommand\sett[2]{\left\{ \left. #1 \;\right\vert #2 \right\}}
\newcommand\Prob[2]{{\Pr_{#1}\left[ {#2} \right]}}
\newcommand\cProb[3]{{\Pr_{#1}\left[ \left. #3 \;\right\vert #2 \right]}}
\newcommand\eps{\varepsilon}
\renewcommand\geq{\geqslant}
\renewcommand\leq{\leqslant}
\newcommand{\rom}[1]{\uppercase\expandafter{\romannumeral #1\relax}}
\title{A Distance Amplification Lemma for Monotonicity}
\author{Dor Minzer\thanks{Department of Mathematics, Massachusetts Institute of Technology, Cambridge, USA. Supported by NSF CCF award 2227876 and 
 NSF CAREER award 2239160.}}
\date{\vspace{-5ex}}
\begin{document}
\maketitle
\begin{abstract}
    We show a procedure that, given oracle access to a function $f\colon \{0,1\}^n\to\{0,1\}$, produces oracle access to a function $f'\colon \{0,1\}^{n'}\to\{0,1\}$ such that 
    if $f$ is monotone, then $f'$ is monotone, and if $f$ is $\eps$-far from monotone, then $f'$ is $\Omega(1)$-far from monotone. Moreover, $n' \leq n 2^{O(1/\eps)}$ and each oracle query to $f'$ can be answered by making $2^{O(1/\eps)}$ oracle queries to $f$. 
    
    Our lemma is motivated by a recent result of [Chen, Chen, Cui, Pires, Stockwell, arXiv:2511.04558], who showed that for all $c>0$ there exists $\eps_c>0$, such that any (even two-sided, adaptive) algorithm distinguishing between monotone functions and $\eps_c$-far from monotone functions, requires $\Omega(n^{1/2-c})$ queries. Combining our lemma with their result implies a similar result, except that the distance from monotonicity is an absolute constant $\eps>0$, and the lower bound is $\Omega(n^{1/2-o(1)})$ queries.
\end{abstract}
\section{Introduction}
Consider the Boolean hypercube $\{0,1\}^n$ equipped with the partial ordering defined as 
$x\leq y$ if $x_i\leq y_i$ for all $i\in[n]$. A function $f\colon \{0,1\}^n\to\{0,1\}$ is called monotone if $x\leq y\Longrightarrow f(x)\leq f(y)$. The distance of $f$ from being monotone is defined as 
\[
\eps(f) = \min_{\substack{g\colon \{0,1\}^n\to\{0,1\}\\\text{monotone}}}\Prob{x\in\{0,1\}^n}{f(x)\neq g(x)}.
\]
In the monotonicity testing problem with proximity parameter $\eps$, one is given a function $f$ which is promised to either be monotone or else satisfy $\eps(f)\geq \eps$, and the task is to distinguish between these two cases. A one-sided $q$-query tester is a randomized algorithm that makes at most $q$ queries to the function $f$ and accepts with probability $1$ if $f$ is monotone and with probability at most $1/3$ if $f$ is $\eps$-far from being monotone. A two-sided $q$-query tester is an algorithm that makes at most $q$ queries to the function $f$ and accepts with probability at least $2/3$ if $f$ is monotone and with probability at most $1/3$ if $f$ is $\eps$-far from being monotone. We say a tester is non-adaptive if for each $i$, the $i$th query it makes is independent of the responses to the first $i-1$ queries. Otherwise, we say the tester is adaptive.

This problem, introduced by~\cite{GGLRS}, has received significant attention over the past two decades. The work~\cite{GGLRS} gave a one-sided, non-adaptive $O(n/\eps)$-query tester, and this result was subsequently improved by a sequence of works~\cite{CS,CST,KMS}, ultimately leading up to a one-sided, non-adaptive $\tilde{O}(\sqrt{n}/\eps^2)$-query tester. 
On the lower-bounds front, the work~\cite{FLNRRS} showed that any one-sided, non-adaptive tester must make $\Omega(\sqrt{n})$-queries, and this was extended to all two-sided, non-adaptive testers in~\cite{CST,CDST,CWX}. These lower bounds did not address adaptive testers, and in particular they did not rule out a $\Theta_{\eps}(\log n)$-query adaptive algorithm for monotonicity.

This changed with the work~\cite{BB}, who established a lower bound of $\tilde{\Omega}(n^{1/4})$ for two-sided, adaptive algorithms. This lower bound was subsequently improved to $\tilde{\Omega}(n^{1/3})$ by~\cite{CWX}. Very recently, a nearly tight lower bound for two-sided, adaptive testers was proved in~\cite{CCCPS}, as stated below.
\begin{thm}\label{thm:CCCPS}
    For all $c>0$, there exists $\eps_c>0$ such that any two-sided, adaptive tester for testing whether a given Boolean function $f\colon\{0,1\}^n\to\{0,1\}$ is monotone or $\eps_c$-far from monotone, requires $\Omega(n^{1/2-c})$ queries. Asymptotically, $\eps_c = 2^{-\Theta(1/c)}$.
\end{thm}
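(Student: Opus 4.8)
The plan is to prove the lower bound through Yao's minimax principle. I would design two distributions $\mathcal{D}_{\mathrm{yes}}$ and $\mathcal{D}_{\mathrm{no}}$ over Boolean functions on $\{0,1\}^n$, where every $f\sim\mathcal{D}_{\mathrm{yes}}$ is monotone and every $f\sim\mathcal{D}_{\mathrm{no}}$ is $\eps_c$-far from monotone with probability $1-o(1)$, and then argue that no deterministic adaptive decision tree of depth $q = o(n^{1/2-c})$ distinguishes a draw from $\mathcal{D}_{\mathrm{yes}}$ from a draw from $\mathcal{D}_{\mathrm{no}}$ with advantage more than, say, $1/10$. Concretely I would construct a ``fooling simulator'' that answers the tester's queries by lazily sampling the randomness underlying the construction in a way that is simultaneously consistent with both distributions, show that the two transcript distributions agree until a low-probability ``revealing event'' occurs, and bound the probability of that event by $o(q\cdot n^{-1/2+c})$.

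For the hard instances I would start from the Talagrand-type random DNF/CNF construction underlying the known non-adaptive $n^{1/2-o(1)}$ bounds: $\mathcal{D}_{\mathrm{yes}}$ is essentially a random monotone DNF whose terms sit at a carefully tuned scale near the middle layer of the cube (truncated so the function is globally monotone), while $\mathcal{D}_{\mathrm{no}}$ is obtained by implanting a sparse family of ``anti-monotone gadgets'' --- for instance replacing the dependence on a hidden pair of directions by a two-to-one, parity-like map on a thin slab --- each of which destroys monotonicity locally but which together perturb $f$ by only a constant $\eps_c$ in Hamming distance. The parameter $c$ enters through the granularity of the construction: to push the query complexity from the previously known $n^{1/3}$ up toward $n^{1/2}$ one nests the construction through $\Theta(1/c)$ scales, and because each nested scale lives inside a constant-fraction subcube of the previous one and hence costs a constant factor in the achievable distance, the no-distribution ends up only $\eps_c = 2^{-\Theta(1/c)}$-far from monotone. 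I would establish the ``$\eps_c$-far'' claim by a directed-isoperimetry / counting argument: any monotone $g$ must disagree with $f\sim\mathcal{D}_{\mathrm{no}}$ on a constant fraction of the points belonging to the implanted gadgets.

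The analytic core is the indistinguishability claim, and adaptivity is the main obstacle there. In the non-adaptive regime one argues that a fixed query set of size $q$ is ``non-revealing'' with probability $1-o(1)$ --- it contains no monotonicity-violating pair supported on the hidden gadget coordinates --- and conditioned on that the answer vector has identical laws under $\mathcal{D}_{\mathrm{yes}}$ and $\mathcal{D}_{\mathrm{no}}$. To handle an adaptive tester I would run its decision tree against an on-line simulator that reveals the construction's randomness only as needed, and track a potential function (the number of gadget coordinates ``touched'' so far, or the conditional probability that the current transcript is revealing) by a martingale/Azuma estimate: since the gadgets are spread over $\Theta(n)$ coordinates and any single query can be correlated with only $O(n^{1/2})$ of them, the tester cannot locate a gadget in $o(n^{1/2-c})$ queries no matter how adaptively it chooses them. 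A clean packaging of this step is a one-way communication / rejection-sampling reduction in the style of Chen--Waingarten--Xie, reducing adaptive testing to bounding the mutual information between the hidden structure and a short transcript.

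I expect two delicate points. First, the structural ``far from monotone'' lemma must survive the $\Theta(1/c)$-fold nesting with only a $2^{-\Theta(1/c)}$ loss rather than a $(1/c)^{-\Theta(1)}$ loss --- this needs the gadgets implanted at different scales not to cancel one another's violations, and one must account carefully for how the truncation interacts across scales. Second, in the adaptive indistinguishability bound one has to genuinely rule out the tester using the answers to early queries to aim later queries at a gadget; the martingale potential has to be chosen so that each increment is $o(n^{-1/2+c})$ regardless of how adaptively the query was selected, which is where the construction's ``spread-out'' geometry is really used. With both pieces in hand, combining them via Yao's principle and optimizing the scale parameters gives the stated $\Omega(n^{1/2-c})$ bound with $\eps_c = 2^{-\Theta(1/c)}$.
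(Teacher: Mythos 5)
This statement is not proved in the paper you were given: \Cref{thm:CCCPS} is imported verbatim from Chen--Chen--Cui--Pires--Stockwell (arXiv:2511.04558) and used as a black box. So there is no ``paper's own proof'' to compare you against. What the paper actually contributes (\Cref{lem:main} and \Cref{thm:main}) is the distance-amplification step that removes the $\eps_c\to 0$ weakness of \Cref{thm:CCCPS}, not a proof of \Cref{thm:CCCPS} itself.

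As for your proposal on its own terms: it is a plausible research plan, and the ingredients you list (Yao's minimax, a Talagrand-style random monotone DNF for $\mathcal{D}_{\mathrm{yes}}$, an implanted anti-monotone perturbation for $\mathcal{D}_{\mathrm{no}}$, a rejection-sampling / mutual-information argument \`a la Chen--Waingarten--Xie to handle adaptivity, and a multi-scale nesting to push $n^{1/3}$ toward $n^{1/2}$ at the cost of $\eps_c = 2^{-\Theta(1/c)}$) are the right genre. But it is not a proof. Every load-bearing step is asserted rather than established: you never specify the gadget, never verify that the no-distribution is actually $\eps_c$-far from monotone (the directed-isoperimetry/counting argument is named but not run), and the adaptive indistinguishability bound rests on the unproved claim that ``any single query can be correlated with only $O(n^{1/2})$'' of the hidden coordinates and that the martingale increments are $o(n^{-1/2+c})$ \emph{regardless of adaptive choice} --- which is exactly the step where previous attempts at an adaptive $\sqrt n$ bound got stuck, and the one place a proof must do real work. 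Your own text flags these as ``delicate points I expect''; in a proof they would need to be resolved, not flagged. If you want to supply a genuine proof of \Cref{thm:CCCPS}, you would need to read the CCCPS paper and reproduce (or improve on) their construction and analysis; reconstructing it from the statement alone at this level of detail is not feasible.
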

\Cref{thm:CCCPS} gives a nearly optimal lower bound for monotonicity testing, and the only slight downside of it is that the parameter $\eps_c$ is vanishing as $c$ approaches $0$. In particular, it does not rule out that for some small constant $\eps>0$, say $\eps = 0.001$, there is an $O(n^{0.49})$ query tester for monotonicity. 
This motivates the question of whether there is a procedure that amplifies distance from monotonicity (while keeping the function monotone if it was so in the first place). In this note we show an affirmative answer to this question, formalized in the following lemma.
\begin{lemma}\label[lemma]{lem:main}
   There exists an absolute constant $\delta>0$ such that for all $\eps>0$, there is a randomized transformation that makes $2^{\Theta(1/\eps)}$ oracle queries to an $n$-variate function $f\colon\{0,1\}^n\to\{0,1\}$, and produces an oracle access to an $n'$-variate function $f'\colon\{0,1\}^{n'}\to\{0,1\}$, with the following properties:
   \begin{enumerate}
    \item We have that $n' = n\cdot 2^{\Theta(1/\eps)}$ and a single query to $f'$ can be simulated by $2^{\Theta(1/\eps)}$ queries to $f$.
    \item If $f$ is monotone, then with probability $1$ we have that $f'$ is monotone.
    \item If $\eps(f)\geq \eps$, then with probability at least $0.99$ we have $\eps(f')\geq \delta$.
\end{enumerate}
\end{lemma}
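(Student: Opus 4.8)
The plan is to first reduce to the case where $f$ is approximately balanced, then amplify the distance by composing $2^{\Theta(1/\eps)}$ independent copies of (the balanced) $f$ through a single monotone ``combining gadget'', and finally to lower bound the distance of the resulting function from monotonicity; this last step carries essentially all the difficulty.

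First I would balance $f$. Since $f$ is $\eps$-far from each of the two constant functions, $p:=\Pr_x[f(x)=1]$ lies in $[\eps,1-\eps]$, and with $2^{O(1/\eps)}$ queries one can estimate $p$ up to additive error $\eps/10$ and, except with probability $0.001$, correctly decide whether $p\le\tfrac12$. Assuming $p\le\tfrac12$ (the case $p>\tfrac12$ being symmetric, with $\vee$ replaced by $\wedge$ below), adjoin $O(\log(1/\eps))$ fresh variables $z$ and an up-set indicator $T$ whose density is chosen, from the estimate of $p$, so that $f(x)\vee T(z)$ equals $1$ with probability $\tfrac12\pm O(\eps)$. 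If $f$ is monotone then so is $f\vee T$; and since $f\vee T$ agrees with $f$ on the down-set $\{z:T(z)=0\}$, which has measure at least $\tfrac12$, while restrictions of monotone functions to subcubes are monotone, averaging over such $z$ gives $\eps(f\vee T)\ge\eps(f)/2\ge\eps/2$. So at the cost of a constant factor in the distance and $O(\log(1/\eps))$ extra variables we may replace $f$ by a balanced function, which I continue to call $f$.

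Next I would take $k=2^{\Theta(1/\eps)}$ independent copies $w^1,\dots,w^k$ of the domain, fix a balanced monotone ``amplifier'' $h\colon\{0,1\}^k\to\{0,1\}$, and set $f'(w^1,\dots,w^k):=h\big(f(w^1),\dots,f(w^k)\big)$, possibly adjoining a few more auxiliary variables used in the final step. Monotonicity of $f$ together with that of $h$ gives property~(2) with probability $1$; a query to $f'$ costs $k=2^{\Theta(1/\eps)}$ queries to $f$; and $n'=n\cdot 2^{\Theta(1/\eps)}$, giving property~(1). Choosing $h$ correctly is one of the delicate points: a too AND-like or OR-like gadget collapses $f'$ toward a constant and destroys the distance, while an overly ``smooth'' gadget such as $\mathrm{Maj}_k$ makes each copy only weakly pivotal (with probability $\Theta(1/\sqrt k)$), which caps the distance one can certify; one wants a balanced monotone $h$ that is simultaneously spread out over the copies and structured enough that monotone approximations of $f'$ decode back to monotone approximations of $f$.

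The hard part will be property~(3): that $\eps(f')\ge\delta$ for an absolute constant $\delta$, except with probability $0.01$ (the failure coming from the estimation of $p$). I would use the standard fact that $\eps(g)\cdot 2^{N}$ equals the minimum vertex cover of the violation graph of $g$ — the graph on $\{0,1\}^{N}$ with an edge between each comparable pair $u<v$ having $g(u)=1>g(v)=0$ — and hence is within a factor $2$ of that graph's maximum matching. In particular, $\eps(f)\ge\eps/2$ provides a matching pairing up an $\Omega(\eps)$-fraction of the points into pairwise-disjoint violating pairs $(u,v)$ of $f$. One then ``lifts'' such a pair by placing it on one block and filling the remaining $k-1$ blocks with a context on which $h$ is pivotal for that block; since $f$ is balanced, a uniformly random filling is pivotal with the appropriate probability, and for a fixed active block this yields a genuine disjoint matching inside the violation graph of $f'$. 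The real obstacle — and the reason $h$, the number of copies, and the auxiliary variables must be designed with care — is that the lifts for different active blocks overlap heavily, so one cannot simply sum them; one needs either to use auxiliary variables to push the pivotal contexts of the various blocks into disjoint regions of the cube, or a global rounding argument (for instance, constructing a large-weight fractional matching of $f'$ of bounded congestion and rounding it to an integral one), and possibly to iterate the whole amplification a bounded number of times to raise the distance from $\Omega(\mathrm{poly}(\eps))$ to $\Omega(1)$. I expect this lifting-and-disjointification step to be the technical heart of the argument.
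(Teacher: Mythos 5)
Your high-level plan is the paper's plan: balance $f$, define $f'(w^1,\ldots,w^k)=h\bigl(f(w^1),\ldots,f(w^k)\bigr)$ for a carefully chosen balanced monotone $h\colon\{0,1\}^k\to\{0,1\}$ with $k=2^{\Theta(1/\eps)}$, and certify $\eps(f')=\Omega(1)$ by exhibiting an $\Omega(2^{kn})$-size matching in the violation graph via the FLNRRS/K\H{o}nig characterization. Your intermediate observations are sound: the balancing step via a small up-set on a few fresh coordinates loses only a constant factor in distance (the paper uses an equivalent but parameter-free gadget on $6$ bits), and you are right that OR and AND are too biased while $\mathrm{Maj}_k$ has vertex boundary of measure only $\Theta(1/\sqrt k)=2^{-\Omega(1/\eps)}$, which caps the certifiable matching size far below $\Omega(2^{kn})$.

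The gap is that you never commit to an $h$ that works, and the whole argument hinges on a specific choice: a tribes function, calibrated (via an empirical estimate of $\Pr[f=1]$) to be roughly balanced under the product measure $\mu_p$. Tribes reconciles exactly the two constraints you identify as being in tension. Its vertex boundary (inputs with a unique satisfied tribe) has constant measure $\approx\tfrac{\ln 2}{2}$, not $2^{-\Omega(1/\eps)}$; yet each boundary point has only $s=\Theta(\log k)=\Theta(1/\eps)$ near-pivotal coordinates (the members of the unique satisfied tribe), so an $\eps$-fraction of the corresponding blocks landing in a maximum matching of $G_{{\sf viol}}^f$ yields $\eps s/p=\Theta(1)$ lifted violating edges per boundary point. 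Symmetrically, a typical input in $f'^{-1}(0)$ has only $\Theta(\log k)$ almost-satisfied tribes, which directly bounds the congestion you worry about: the right-degree in the bipartite graph of lifted pairs is at most the number of almost-satisfied tribes times the matching density $\eps/(1-p)$, again $\Theta(1)$ in expectation and concentrated by Chernoff. So no auxiliary variables, no fractional-matching rounding, and no iteration are needed; after discarding the few abnormally high-degree vertices on the right, the bipartite graph has $\Omega(2^{kn})$ edges with $O(1)$ degrees on both sides, and greedy already extracts a matching of size $\Omega(2^{kn})$. In short, the ``lifting and disjointification'' step is the crux exactly as you suspected, but it is resolved not by an extra combinatorial layer on top of $h$ but by taking $h$ to be tribes, which trades $\mathrm{Maj}_k$'s thin boundary for a constant-measure boundary with only logarithmically many near-pivotal coordinates per point.
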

Using~\Cref{lem:main}, we immediately deduce from~\Cref{thm:CCCPS} the following result:
\begin{thm}\label{thm:main}
    There exists $\delta>0$ such that any two-sided, adaptive tester for testing whether a given Boolean function $f\colon\{0,1\}^n\to\{0,1\}$ is monotone or $\delta$-far from monotone, requires $\Omega(n^{1/2-o(1)})$ queries.
\end{thm}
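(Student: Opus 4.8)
The plan is to obtain \Cref{thm:main} as a black-box reduction combining \Cref{lem:main} with \Cref{thm:CCCPS}. Take $\delta>0$ to be the absolute constant furnished by \Cref{lem:main}; this will be the constant in the conclusion. Suppose $A$ is any two-sided, adaptive tester that, given oracle access to an $N$-variate Boolean function, distinguishes the monotone case from the $\delta$-far case using $q(N)$ queries. First I would amplify $A$ by a constant number of independent repetitions followed by a majority vote, producing a tester $A'$ with query complexity $O(q(N))$, completeness at least $0.99$, and soundness error at most $0.01$.

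Next, fix an arbitrary $c>0$ and let $\eps_c = 2^{-\Theta(1/c)}$ be the proximity parameter provided by \Cref{thm:CCCPS}. I will build from $A'$ a tester for the $\eps_c$-far problem in $n$ variables. Given oracle access to an $n$-variate $f$ promised to be either monotone or $\eps_c$-far from monotone, run the randomized transformation of \Cref{lem:main} with $\eps=\eps_c$ to obtain oracle access to an $n'$-variate function $f'$ with $n'=n\cdot 2^{\Theta(1/\eps_c)}$, run $A'$ on $f'$, and output its verdict. If $f$ is monotone then $f'$ is monotone with probability $1$, so the combined procedure accepts with probability at least $0.99$. If $f$ is $\eps_c$-far from monotone then $\eps(f')\geq\delta$ with probability at least $0.99$, and conditioned on that event $A'$ accepts with probability at most $0.01$, so the combined procedure accepts with probability at most $0.01+0.99\cdot 0.01<1/3$. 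Hence this is a legal two-sided, adaptive tester for monotone versus $\eps_c$-far in $n$ variables, and its query cost is $2^{\Theta(1/\eps_c)}\cdot\big(q(n')+1\big)$: each of the $O(q(n'))$ queries $A'$ makes to $f'$ is simulated by $2^{\Theta(1/\eps_c)}$ queries to $f$ by item~1 of \Cref{lem:main}, plus the $2^{\Theta(1/\eps_c)}$ queries used to set up the transformation.

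Now apply \Cref{thm:CCCPS}: the tester just constructed must make $\Omega(n^{1/2-c})$ queries, so $2^{\Theta(1/\eps_c)}\cdot\big(q(n')+1\big)=\Omega(n^{1/2-c})$. Writing $K_c := 2^{\Theta(1/\eps_c)} = 2^{2^{\Theta(1/c)}}$, a finite constant depending only on $c$, and recalling $n'=K_c\, n$, we get $q(K_c\, n)=\Omega(n^{1/2-c}/K_c)=\Omega_c(n^{1/2-c})$; substituting $m=K_c\, n$ and absorbing the constant yields $q(m)=\Omega_c(m^{1/2-c})$ for every $c>0$. A routine diagonalization over a sequence $c_k\to 0$ (folding the $c$-dependent constants into the exponent in the usual way) upgrades this to $q(m)=\Omega(m^{1/2-o(1)})$, which is the claimed bound.

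I do not anticipate a genuine obstacle here: once \Cref{lem:main} is available the reduction is entirely routine. The only two points needing a moment's care are (i) amplifying $A$ to $A'$ so that the $0.99$ success probability in item~3 of \Cref{lem:main} does not compromise soundness, and (ii) the bookkeeping of the blow-up: since $\eps_c$ in \Cref{thm:CCCPS} decays like $2^{-\Theta(1/c)}$, the overhead $2^{\Theta(1/\eps_c)}$ is doubly exponential in $1/c$, which is precisely why the reduction loses the explicit dependence and yields $\Omega(n^{1/2-o(1)})$ for a fixed constant $\delta$ rather than $\Omega(n^{1/2-c})$. The substantive work lies entirely in proving \Cref{lem:main} itself.
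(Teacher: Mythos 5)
Your proof is correct and follows essentially the same route as the paper: run the \Cref{lem:main} transformation with $\eps=\eps_c$, feed the resulting oracle to the hypothetical $\delta$-far tester, and invoke \Cref{thm:CCCPS}. The only cosmetic differences are that you amplify the tester first and keep $c$ fixed before diagonalizing, whereas the paper sets $c = \xi/\log\log n$ directly and works with the distributional form of \Cref{thm:CCCPS} to absorb the $0.99$ success probability; the calculations and the doubly-exponential-in-$1/c$ loss are the same.
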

\begin{remark}
The argument we give for~\Cref{lem:main} (and subsequently for~\Cref{thm:main}) can be adapted to give a $\delta$ arbitrarily close to $\frac{\ln 2}{2}$. It may be possible to achieve $\delta$ arbitrarily close to $1/2$, and also to improve the blow-up to be polynomial in $1/\eps$. Getting $\delta$ close to $1/2$ would be optimal (as any Boolean function is $1/2$-close to some constant function, which is monotone). Improving the blow-up to polynomial in $1/\eps$ would improve $o(1)$ in the exponent in~\Cref{thm:main}, but would not directly lead to a lower bound of $\tilde{\Omega}(\sqrt{n})$ for adaptive, two-sided testers. For that, it seems necessary to improve~\Cref{thm:CCCPS} to have $\eps_c = c^{\Theta(1)}$.
\end{remark}
\section{Preliminaries}
\subsection{Violating Pairs, Matchings and Distance from Monotonicity}
Fix a function $f\colon \{0,1\}^n\to\{0,1\}$. A pair of points $x,y\in\{0,1\}^n$ is said to be a violating pair if $x<y$ and $f(x)>f(y)$ or $x>y$ and $f(x)<f(y)$. The violation graph of $f$, denoted by $G_{{\sf viol}}^f$, is a bipartite graph with sides $f^{-1}(1)$ and $f^{-1}(0)$, and whose edges are the violating pairs. The following result from~\cite{FLNRRS} gives a connection between $\eps(f)$ and the size of the largest matching in 
$G_{{\sf viol}}^f$.
\begin{lemma}\label[lemma]{lem:mono_vs_matching}
    For any function $f\colon \{0,1\}^n\to\{0,1\}$ we have that $\eps(f) 2^{n}$ is equal to the size of the largest matching in $G_{{\sf viol}}^f$.
\end{lemma}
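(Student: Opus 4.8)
The plan is to prove the identity by establishing the two inequalities $\eps(f)2^n \geq \nu$ and $\eps(f)2^n \leq \nu$, where $\nu$ denotes the size of a maximum matching in $G_{{\sf viol}}^f$ and $\eps(f)2^n$ is read as the minimum, over monotone $g$, of the number of inputs on which $f$ and $g$ disagree. For the first inequality, fix a closest monotone function $g$ and a maximum matching $M$ in $G_{{\sf viol}}^f$. Every edge of $M$ is a violating pair $\set{x,y}$, say with $x < y$ and $f(x) = 1 > 0 = f(y)$; since $g$ is monotone it cannot have $g(x) = 1$ and $g(y) = 0$ simultaneously, so $g$ must disagree with $f$ on at least one endpoint of this edge. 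As the edges of $M$ are vertex-disjoint, $g$ disagrees with $f$ on at least $\card{M} = \nu$ points, which is the desired bound.

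For the reverse inequality I would use K\"onig's theorem. The graph $G_{{\sf viol}}^f$ is bipartite with sides $f^{-1}(1)$ and $f^{-1}(0)$, so it has a vertex cover $C$ with $\card{C} = \nu$. Set $S = \sett{z}{z\notin C,\ f(z) = 1}$ and define $g$ to be the indicator of the up-set generated by $S$, i.e.\ $g(x) = 1$ exactly when $z \leq x$ for some $z \in S$; this $g$ is monotone by construction. The point is that $g$ and $f$ agree at every $x \notin C$: if $f(x) = 1$ then $x \in S$, so $g(x) = 1$; and if $f(x) = 0$ but $g(x) = 1$, then some $z \in S$ has $z \leq x$, and since $f(z) = 1 \neq 0 = f(x)$ in fact $z < x$, making $\set{z,x}$ a violating pair with neither endpoint in $C$ --- impossible. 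Hence $\set{x : f(x) \neq g(x)} \subseteq C$, so $\eps(f)2^n \leq \card{C} = \nu$, and the two inequalities together give the lemma.

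The only step that needs an actual idea is the reverse inequality: one has to recognize that the violation graph is genuinely bipartite (so that K\"onig's theorem, equating maximum matching with minimum vertex cover, is available) and then hit on the correct monotone repair of $f$ --- the upward closure of the $1$-inputs lying outside a minimum vertex cover --- and verify that it only alters $f$ inside that cover. The forward inequality is essentially a one-line disjointness argument and needs nothing beyond the definitions.
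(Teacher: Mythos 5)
Your proof is correct and, at its core, uses the same route the paper does: observe that $G_{{\sf viol}}^f$ is bipartite and invoke K\"onig's theorem to pass between matchings and vertex covers. The difference is that the paper simply cites \cite[Corollary 2]{FLNRRS} for the equality $\eps(f)2^n = $ (minimum vertex cover size) and then applies K\"onig, whereas you reprove that cited fact from scratch. Your forward inequality (each matching edge forces at least one disagreement with any monotone $g$, and matching edges are vertex-disjoint) is the easy half; your reverse inequality supplies the real content of the FLNRRS lemma, namely the explicit monotone repair $g = \mathbbm{1}[\exists z\in S,\, z\le x]$ with $S$ the $1$-inputs outside a minimum vertex cover, together with the check that $f$ and $g$ can only disagree inside the cover. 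That check is correct: if $x\notin C$ with $f(x)=0$ but $g(x)=1$, you get $z\in S$ with $z<x$ (equality is ruled out since $f(z)=1\neq f(x)$), so $\{z,x\}$ is a violating edge missed by $C$. So the proposal is a valid, self-contained expansion of the paper's one-line citation-based proof; it buys you independence from the external reference at the cost of a slightly longer argument.
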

\begin{proof}
    By~\cite[Corollary 2]{FLNRRS} we have that $\eps(f)2^n$ is equal to the size of the smallest vertex cover in $G_{\sf viol}^f$. The result now follows by K\H{o}nig's theorem.
\end{proof}

\subsection{The Tribes Function}\label{sec:tribes}
We will use the tribes function from~\cite{BenOrLinial}. Typically, it is studied when the hypercube $\{0,1\}^k$ is 
equipped with the uniform measure. For technical reasons we will need to consider 
the hypercube with the $p$-biased measure $\mu_p(x) = p^{|x|}(1-p)^{k-|x|}$ for $p$ very close to some $\alpha\in (0,1)$ bounded away from $0$ and $1$. 

For an integer 
$k\in\mathbb{N}$ we take $T_1,\ldots,T_r\subseteq [k]$ a maximal collection of disjoint sets, all of size $\log_{1/\alpha} k-10\log_{1/\alpha}\log k$. We note that 
$r\geq \frac{k}{\log_{1/\alpha} k}$ and
\begin{equation}\label{eq:calc_alpha}
\Prob{x\sim \mu_{\alpha}}{\bigwedge_{j\in T_1} x_j=1} = \alpha^{\log_{1/\alpha} k-10\log_{1/\alpha}\log k}
=\frac{\log^{10} k}{k},
\end{equation}
so we may choose $r'\leq r$, say $r' = \lfloor\frac{k\ln 2}{\log^{10} k}\rfloor$, such that defining the function $T\colon \{0,1\}^k\to\{0,1\}$ by
\[
T(x) = \bigvee_{i=1}^{r'}\bigwedge_{j\in T_i} x_j,
\]
we have that $\Prob{x\sim\mu_{\alpha}}{T(x)=1}= \frac{1}{2}+o(1)$. We fix such a choice and state a few well-known facts about the tribes function. First, we show the probability $T$ is $1$ is roughly $1/2$ also under $\mu_p$.
\begin{fact}\label[fact]{fact:compute_avg_tribe}
    Suppose that $\frac{1}{10}\leq \alpha\leq \frac{9}{10}$, 
    $\card{p-\alpha}\leq \frac{1}{k}$ and that $k$ is large enough. Then 
    \[
    \Prob{x\sim\mu_{p}}{T(x)=1}
    =\frac{1}{2}+o(1).
    \]
\end{fact}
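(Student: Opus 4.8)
The plan is to exploit the disjointness of the sets $T_i$, which makes the events $\bigwedge_{j\in T_i}x_j=1$ mutually independent under any product measure $\mu_p$. Writing $w=\log_{1/\alpha}k-10\log_{1/\alpha}\log k$ for the common size of the $T_i$, this independence gives
\[
\Prob{x\sim\mu_p}{T(x)=0}=\prod_{i=1}^{r'}\bigl(1-p^{w}\bigr)=\bigl(1-p^{w}\bigr)^{r'},
\]
so the whole statement reduces to estimating $p^{w}$ and then showing that $r'p^{w}=\ln 2\cdot(1+o(1))$; recall that under $\mu_\alpha$ this is exactly how $r'=\lfloor k\ln 2/\log^{10}k\rfloor$ was chosen, using \eqref{eq:calc_alpha}.

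First I would control the ratio $p^{w}/\alpha^{w}=(p/\alpha)^{w}$. Since $\alpha$ is bounded away from $1$, $\ln(1/\alpha)$ is a positive constant, so $w\leq\log_{1/\alpha}k=O(\log k)$. Since $\alpha$ is bounded away from $0$ and $|p-\alpha|\leq 1/k$, we have $p/\alpha=1+O(1/k)$. Hence $(p/\alpha)^{w}=\exp\!\bigl(w\ln(1+O(1/k))\bigr)=\exp\!\bigl(O(\log k/k)\bigr)=1+o(1)$, and combined with \eqref{eq:calc_alpha} this yields $p^{w}=\alpha^{w}(1+o(1))=\frac{\log^{10}k}{k}\,(1+o(1))$.

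Next I would substitute this into the product. Because $k\ln 2/\log^{10}k\to\infty$, we have $r'=\frac{k\ln 2}{\log^{10}k}(1+o(1))$, and therefore $r'p^{w}=\ln 2\cdot(1+o(1))$. Since $p^{w}\to 0$, the estimate $\ln(1-t)=-t+O(t^{2})$ gives $r'\ln(1-p^{w})=-r'p^{w}+O(r'p^{2w})=-\ln 2+o(1)$, where $r'p^{2w}=(r'p^{w})\cdot p^{w}=o(1)$. Exponentiating, $\Prob{x\sim\mu_p}{T(x)=0}=e^{-\ln 2+o(1)}=\tfrac12(1+o(1))$, and the claim follows by taking complements.

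The argument is entirely elementary; the only place the hypotheses are genuinely needed is the bound $w=O(\log k)$, which uses that $\alpha$ is bounded away from $1$. Without it, the perturbation $p^{w}/\alpha^{w}$ need not be $1+o(1)$, and the equivalence between the $\mu_p$- and $\mu_\alpha$-behaviour of the tribes function would break. I do not anticipate any real obstacle beyond keeping track of the $o(1)$ error terms.
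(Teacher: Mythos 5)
Your proof is correct and takes essentially the same approach as the paper: use disjointness of the tribes for independence, compare $p^w$ to $\alpha^w$ using $|p-\alpha|\leq 1/k$ and $w=O(\log k)$, and then estimate $(1-p^w)^{r'}$. The only cosmetic difference is in the last step: the paper factors the computation through $\Prob{x\sim\mu_\alpha}{T(x)=0}$ (which was already arranged to be $\tfrac12+o(1)$ by the choice of $r'$), whereas you compute $(1-p^w)^{r'}$ directly via $r'p^w=\ln 2\cdot(1+o(1))$ and the first-order expansion of $\ln(1-t)$; both are sound and of comparable length.
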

\begin{proof}
  Note that sampling $x\sim\mu_p$, the probability that $T_1(x) = 1$ is equal to
   \begin{equation}\label{eq:p_power_est}
   p^{\log_{1/\alpha} k-10\log_{1/\alpha}\log k}
   =\frac{\log^{10}k}{k}\left(1+\frac{p-\alpha}{\alpha}\right)^{\log_{1/\alpha} k-10\log_{1/\alpha}\log k}
   =\frac{\log^{10}k}{k} \left(1+\Theta\left(\frac{\log k}{k}\right)\right),
   \end{equation}
   where in the first transition we used~\eqref{eq:calc_alpha}. It follows that
   \begin{align*}
   \Prob{x\sim \mu_{p}}{T(x)=1}
   &=1-\left(1-\frac{\log^{10}k}{k} \left(1+\Theta\left(\frac{\log k}{k}\right)\right)\right)^{r'}\\
   &=1-\Prob{x\sim \mu_{\alpha}}{T(x)=0}
   \left(\frac{1-\frac{\log^{10} k}{k}\pm \Theta\left(\frac{\log^{11} k}{k^2}\right)}{1-\frac{\log^{10} k}{k}}\right)^{r'}\\
   &=1-\Prob{x\sim \mu_{\alpha}}{T(x)=0}
   \left(1\pm \Theta\left(\frac{\log^{11} k}{k^2}\right)\right)^{r'},
   \end{align*}
   which is equal to $\frac{1}{2}+o(1)$.
\end{proof}

We next need a few facts about the vertex-boundary of the function $T$. For a given input $x\in \{0,1\}^k$, we say a tribe $T_i$ is almost satisfied if all but one its variables are assigned $1$ in $x$. We denote by ${\sf as}_T(x)$ the number of tribes $T_i$ that are almost satisfied. 
\begin{fact}\label[fact]{fact:almost_sat}
Suppose that $\frac{1}{10}\leq \alpha\leq \frac{9}{10}$, 
    $\card{p-\alpha}\leq \frac{1}{k}$ and that $k$ is large enough. Then for all $\delta>0$
    \[
    \Prob{x\sim\mu_p}{{\sf as}_T(x)\geq (\ln 2+\delta)\log_{1/\alpha} k}\leq k^{-\Omega_{\delta}(1)}.
    \]
\end{fact}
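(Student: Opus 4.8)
The observation that makes this easy is that $T_1,\dots,T_{r'}$ are pairwise disjoint, so the events ``$T_i$ is almost satisfied'' depend on disjoint blocks of coordinates and are therefore mutually independent under the product measure $\mu_p$. Writing ${\sf as}_T(x)=\sum_{i=1}^{r'}\mathbf{1}[T_i\text{ is almost satisfied}]$, this exhibits ${\sf as}_T(x)$ as a sum of $r'$ independent indicator random variables, so it suffices to estimate its expectation and then apply a Chernoff bound.

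\emph{Computing the expectation.} Every tribe has size $w:=\log_{1/\alpha}k-10\log_{1/\alpha}\log k$, and ``$T_i$ is almost satisfied'' means exactly $w-1$ of its $w$ coordinates equal $1$, an event of $\mu_p$-probability $\binom{w}{w-1}p^{w-1}(1-p)=\frac{w(1-p)}{p}p^{w}$. Plugging in $p^{w}=\frac{\log^{10}k}{k}(1+\Theta(\log k/k))$ from~\eqref{eq:p_power_est} and multiplying by $r'=\round{\frac{k\ln 2}{\log^{10}k}}$, the factors $\frac{\log^{10}k}{k}$ and $\frac{k}{\log^{10}k}$ cancel and one is left with
\[
\mu:=\Expect{x\sim\mu_p}{{\sf as}_T(x)}=\frac{w(1-p)}{p}\,\bigl(\ln 2+o(1)\bigr).
\]
Since $\tfrac1{10}\le\alpha\le\tfrac9{10}$ and $|p-\alpha|\le\tfrac1k$, the prefactor $\tfrac{1-p}{p}$ is $\Theta(1)$, so $\mu=\Theta(\log k)$; and since $w\le\log_{1/\alpha}k$ (with $\tfrac{1-p}{p}$ close to $\tfrac{1-\alpha}{\alpha}$), in the balanced parameter regime relevant to the construction $\mu$ is at most $(\ln 2+o(1))\log_{1/\alpha}k$, i.e.\ below the target threshold $(\ln 2+\delta)\log_{1/\alpha}k$ by a multiplicative factor $1+\Omega_\delta(1)$ once $k$ is large.

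\emph{Tail bound.} For a sum $S$ of independent $\{0,1\}$-valued random variables with mean $\mu$ and any $\kappa>1$, the multiplicative Chernoff bound gives $\Prob{}{S\ge\kappa\mu}\le e^{-(\kappa\ln\kappa-\kappa+1)\mu}$. Taking $\kappa$ with $\kappa\mu=(\ln 2+\delta)\log_{1/\alpha}k$, the previous paragraph gives $\kappa\ge1+\Omega_\delta(1)$, hence $\kappa\ln\kappa-\kappa+1\ge c_\delta>0$, so
\[
\Prob{x\sim\mu_p}{{\sf as}_T(x)\ge(\ln 2+\delta)\log_{1/\alpha}k}\le e^{-c_\delta\mu}=e^{-\Omega_\delta(\log k)}=k^{-\Omega_\delta(1)},
\]
which is the claimed bound.

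\emph{The main obstacle.} The concentration step is completely routine once independence of the per-tribe indicators is noted, and that independence is immediate from the tribes being disjoint — no Fourier-analytic input or bookkeeping of correlations is needed. The one point needing care is the constant in the expectation: one must check that the calibration $r'=\round{k\ln 2/\log^{10}k}$, chosen precisely so that $T$ is balanced, forces $\mu$ to land just below $\ln 2\cdot\log_{1/\alpha}k$ rather than at some larger constant multiple of $\log k$, since it is exactly this gap that keeps the Chernoff deviation factor bounded away from $1$ uniformly in $\delta$.
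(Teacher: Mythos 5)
Your overall strategy is the same as the paper's: observe that the per-tribe ``almost satisfied'' indicators are independent (because the $T_i$ are disjoint), compute the expected count, and finish with a multiplicative Chernoff bound. However, your expectation calculation is more careful than the paper's, and that care exposes a genuine issue that also affects the paper's own proof.

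You correctly write the per-tribe probability as $w\,p^{w-1}(1-p)$, so $\mu=\Expect{x\sim\mu_p}{{\sf as}_T(x)}=\frac{1-p}{p}\,w\,\ln 2\,(1+o(1))$. The paper instead writes the per-tribe probability as $w\,p^{w-1}$ (dropping the $(1-p)$) and then bounds it by $(1+o(1))\log_{1/\alpha}k\cdot\frac{\log^{10}k}{k}$; but since $p^{w-1}=p^w/p$ and $w=(1-o(1))\log_{1/\alpha}k$, this last inequality silently loses a factor of $\frac{1}{p}$, and is false unless $p=1-o(1)$. So the paper's asserted bound $\E[Z]\le(\ln 2+o(1))\log_{1/\alpha}k$ is not justified in general. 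Your exact expression makes the missing factor visible: $\mu\approx\frac{1-p}{p}\ln 2\,\log_{1/\alpha}k$, which for $p<\tfrac12$ exceeds $\ln 2\,\log_{1/\alpha}k$, and therefore exceeds the claimed threshold $(\ln 2+\delta)\log_{1/\alpha}k$ once $\delta$ is small enough. In that regime no Chernoff bound can give a decaying tail, since the threshold is \emph{below} the mean; the stated fact is simply not true for all $\alpha\in[\tfrac1{10},\tfrac9{10}]$.

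The gap in your write-up is that you notice this (``the prefactor $\tfrac{1-p}{p}$ is $\Theta(1)$'') but then retreat to ``in the balanced parameter regime relevant to the construction $\mu$ is at most $(\ln 2+o(1))\log_{1/\alpha}k$.'' That clause is not part of the stated hypothesis, and the construction in the main proof only guarantees $p\in[\tfrac13,\tfrac23]$ after preprocessing, so $\tfrac{1-p}{p}$ can be as large as $2$; the ``balanced regime'' you need is $p\ge\tfrac12-O(\delta)$, which is not ensured. So your Chernoff step is only valid under an unstated extra assumption. The honest conclusion from your (correct) computation is that the threshold in the statement should carry the factor $\max\!\left(1,\tfrac{1-p}{p}\right)$ (equivalently $\tfrac{1-\alpha}{\alpha}$ up to $1+o(1)$), after which the rest of your argument goes through verbatim; the downstream use of the fact in the main proof can absorb this factor since it already carries $\max(1/p,1/(1-p))$ in the final matching estimate.
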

\begin{proof}
    Sampling $x$, we note that the events $E_i$ that $T_i$ is nearly satisfied are independent. The probability that a fixed tribe $T_i$ is almost satisfied is equal to
    \[
    (\log_{1/\alpha} k-10\log_{1/\alpha}\log k)
    \cdot p^{\log_{1/\alpha} k-10\log_{1/\alpha}\log k-1}
    \leq (1+o(1))\log_{1/\alpha} k\frac{\log^{10} k}{k}, 
    \]
    where we used~\eqref{eq:p_power_est}. Thus, letting $Z$ be the number of tribes that are almost satisfied, we have that $\E[Z] \leq r'\cdot (1+o(1))\log_{1/\alpha} k\frac{\log^{10} k}{k}  \leq (\ln 2+o(1))\log_{1/\alpha} k$, and the result follows by Chernoff's bound.
\end{proof}

For an input $x\in \{0,1\}^k$ such that $T(x)=1$, we say that $x$ is on the vertex-boundary if there is a unique tribe $T_i$ that evaluates to $1$ under $x$.
\begin{fact}\label[fact]{fact:vtx_boundary}
Suppose that $\frac{1}{10}\leq \alpha\leq \frac{9}{10}$, 
    $\card{p-\alpha}\leq \frac{1}{k}$ and that $k$ is large enough. Then
    \[
    \Prob{x\sim\mu_p}{T(x) = 1\text{ and $x$ is on the vertex-boundary}} =  \frac{\ln 2}{2}(1+o(1)).
    \]
\end{fact}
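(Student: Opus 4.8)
The plan is to reduce the event in question to a statement about a single binomial random variable. By definition, $x$ satisfies $T(x)=1$ and lies on the vertex-boundary precisely when \emph{exactly one} of the $r'$ tribes $T_1,\dots,T_{r'}$ evaluates to $1$ under $x$ (if a unique tribe is satisfied then automatically $T(x)=1$). Since the tribes are supported on pairwise disjoint coordinate blocks, under the product measure $\mu_p$ the indicator events $\{\bigwedge_{j\in T_i}x_j=1\}$ are mutually independent, and each has probability $q := p^{\,\log_{1/\alpha} k-10\log_{1/\alpha}\log k}$. Hence the number $S$ of satisfied tribes is distributed as $\mathrm{Bin}(r',q)$, and the quantity we must estimate is
\[
\Prob{x\sim\mu_p}{T(x)=1 \text{ and $x$ on the vertex-boundary}} = \Prob{}{S=1} = r'\,q\,(1-q)^{r'-1}.
\]

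It remains to evaluate the two factors. For $r'q$, I would invoke~\eqref{eq:p_power_est}, which gives $q=\frac{\log^{10}k}{k}\bigl(1+\Theta(\tfrac{\log k}{k})\bigr)$; combined with the choice $r'=\lfloor\frac{k\ln 2}{\log^{10}k}\rfloor$, this yields $r'q=\ln 2\cdot(1+o(1))$, the floor and the relative error in $q$ both contributing only an $o(1)$ additive term since $r'q=\Theta(1)$. For $(1-q)^{r'-1}$, I would observe that $(1-q)^{r'}=\Prob{x\sim\mu_p}{T(x)=0}=\frac12+o(1)$ by~\Cref{fact:compute_avg_tribe}, and since $q=o(1)$ we have $(1-q)^{r'-1}=(1-q)^{r'}\cdot(1-q)^{-1}=\frac12+o(1)$ as well. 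Multiplying the two estimates gives $\Pr[S=1]=(\ln 2+o(1))(\tfrac12+o(1))=\frac{\ln 2}{2}(1+o(1))$, as desired.

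There is essentially no genuine obstacle here; the proof is a short computation. The only point requiring a little care is bookkeeping of the error terms — in particular checking that the floor in the definition of $r'$ and the $\Theta(\log k/k)$ multiplicative slack in $q$ coming from $|p-\alpha|\le \tfrac1k$ both get absorbed into $o(1)$, which they do because $r'q$ is bounded. I would also note that this is exactly the place where the specific normalization $r'=\lfloor k\ln 2/\log^{10}k\rfloor$ (rather than an arbitrary $r'\le r$) is used, since it is what pins down the constant $\frac{\ln 2}{2}$.
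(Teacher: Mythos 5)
Your proof is correct and follows essentially the same computation as the paper: both arrive at $r'q(1-q)^{r'-1}$ and estimate the two factors via \eqref{eq:p_power_est} and the choice of $r'$. The only cosmetic difference is that you package the count of satisfied tribes as a $\mathrm{Bin}(r',q)$ variable and reuse \Cref{fact:compute_avg_tribe} for the $(1-q)^{r'-1}$ factor, whereas the paper fixes a tribe, computes the probability it is the unique satisfied one, and sums over the $r'$ disjoint events — same arithmetic either way.
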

\begin{proof}
    Fix a tribe $T_i$. The probability that it is the only tribe evaluating to $1$ under a randomly chosen $x\sim \mu_p$ is equal to
    \begin{align*}
    &p^{\log_{1/\alpha} k-10\log_{1/\alpha}\log k}
    (1-p^{\log_{1/\alpha} k-10\log_{1/\alpha}\log k})^{r'-1}\\
    &=\frac{\log^{10}k}{k} \left(1+\Theta\left(\frac{\log k}{k}\right)\right)
    \left(1-\frac{\log^{10}k}{k} \left(1+\Theta\left(\frac{\log k}{k}\right)\right)\right)^{r'-1}\\
    &=\frac{\log^{10}k}{k} \left(\frac{1}{2}+o(1)\right),
    \end{align*}
    where in the second transition we used~\eqref{eq:p_power_est}, and in the last transition we used the choice of $r'$. Since these events are disjoint, the probability in question is equal to
    \[
    r'\frac{\log^{10}k}{k} \left(\frac{1}{2}+o(1)\right) = \frac{\ln 2}{2}(1+o(1)).\qedhere
    \]
\end{proof}
\section{Proofs}
\subsection{Proof of~\Cref{lem:main}}
Let $f$ be a function as in the statement of the lemma, and let $p$ denote its average. First, we argue that we may assume without loss of generality that $\frac{1}{3}\leq p\leq \frac{2}{3}$. Indeed, we define $F\colon \{0,1\}^{n+6}\to\{0,1\}$ 
as $F(y,x) = f(x)$ if $|y|=3$, $F(y,x) = 0$ if $|y|\leq 2$ and $F(y,x) = 1$ if $|y|\geq 4$. It is easy to see that the probability $F$ gets each one of the values $0$ and $1$ is at least $\frac{15+6+1}{64}\geq \frac{1}{3}$. Also, if $f$ is monotone then $F$ is monotone, and if $\eps(f)\geq \eps$, then $\eps(F)\geq \Omega(\eps)$. 

We now describe the randomized transformation.
Let $k = 2^{\frac{C}{\eps}}$ be an integer for $C>0$ a sufficiently large absolute constant. We query $f$ at $k^3$ uniformly and independently chosen points, and let $\alpha$ be the resulting empirical average. Take $T\colon \{0,1\}^k\to\{0,1\}$ to be the tribes from~\Cref{sec:tribes} corresponding to this $\alpha$, and define $f'\colon \{0,1\}^{k n}\to\{0,1\}$ as follows. First, interpret the input $x$ as $k$ blocks $x(1),\ldots,x(k)\in \{0,1\}^n$, and set
\[
f'(x) = T(f(x(1)),\ldots,f(x(k))).
\]
This completes the description of the transformation, and we analyze its properties. For the first item, to query $f'$ on input $x$, we need to query $f$ at the $k$ inputs $x(1),\ldots,x(k)$. For the second item, if $f$ is monotone, then $f'$ is the composition of monotone functions and hence it is also monotone. The rest of the argument is devoted for the third item. Note that with probability at least $0.99$, we have that $\card{\alpha -p}\leq \frac{1}{k}$. We show that in that case, $f'$ is $\Omega(1)$-far from monotone.

Using~\Cref{lem:mono_vs_matching}, we conclude that the bipartite graph $G_{{\sf viol}}^{f}$ has a matching of size $\eps 2^n$. Take $X\subseteq \{0,1\}^n$ to be the left endpoints of this matching, i.e., the side of inputs evaluating to $1$ under $f$. For each $x\in X$ let $M(x)$ be the point it is matched to.

Sample $x\in\{0,1\}^{k n}$, and let $y = (f(x(1)),\ldots,f(x(k)))\in \{0,1\}^k$. Note that the distribution of $y$ is $\mu_p$. Combining~\Cref{fact:vtx_boundary,fact:almost_sat} we conclude that with probability at least $\frac{\ln 2}{2}-0.01$ over the choice of $x$, we have that ${\sf as}_T(y)\leq (\ln 2 + 0.01)\log_{1/\alpha} k$, $T(y) = 1$ and $y$ is on the vertex-boundary of $T$; we denote this event by $E$ and condition on it henceforth. Thus, there is a unique tribe $T_i$ which is fully satisfied, and we write $T_i = \{j_1,\ldots,j_s\}$. We note that the points $x(j_1),\ldots,x(j_s)$ are distributed independently and uniformly in $f^{-1}(1)$, so each one of them falls in $X$ with probability $\eps/p$. Letting $Z(x)$ be the number of $\ell$ such that $x(j_{\ell})\in X$, we have that $\E[Z] = \eps s/p$, and
as $s = \log_{1/\alpha}k - 10\log_{1/\alpha}\log k = \Theta(C/\eps)$ Chernoff's bound gives that
\[
\cProb{}{E}{\card{Z(x)-\eps s/p} \leq 0.01\eps s/p}\geq 0.99
\]
for large enough $C$. Define 
\[
W = \sett{x}{x\in E, \card{Z(x)-\eps s/p}\leq 0.01\eps s/p},
\]
then $\Prob{x\in\{0,1\}^{kn}}{x\in W}\geq \frac{\ln 2}{2}-0.02$.

We construct a bipartite graph $H$ whose left side is $W$, and whose right side is ${f'}^{-1}(0)$. For each $x\in W$ let $T_i$ be the unique tribe evaluating to $1$, and let $T_i'\subseteq T_i$ be the subset of coordinates $j\in T_i$ such that $x(j)\in X$. For each $j\in T_i'$ we add an edge to $H$ between $x$ and the point $x'$, which differs from $x$ only on block $j$, on which it is equal to $M(x(j))$.

Clearly, each $x\in W$ has degree $Z(x)\geq 0.99\eps s/p$, so the number of edges in $H$ is at least $0.99\eps s|W|/p$. We now estimate the right degrees in $W$. Sample $x\in {f'}^{-1}(0)$ uniformly and set $y = (f(x(1)),\ldots,f(x(k)))$. If ${\sf as}_T(y)\geq (\ln 2 + 0.01)\cdot \log_{1/\alpha} k + 2$, then it has degree $0$, so we condition henceforth on the event that ${\sf as}_T(y)\leq (\ln 2+0.01)\cdot \log_{1/\alpha} k +1\leq (\ln 2+0.02)\cdot \log_{1/\alpha} k$. 
Let $I = \{i~|~ T_i \text{ is almost satisfied by $x$}\}$, so that $|I| =  {\sf as}_T(y)\leq  (\ln 2+0.02)\cdot \log_{1/\alpha} k$.
Then for each $i\in I$ we take $j_i\in T_i$ to be the unique block such that $f(x(j_i))=0$. Then 
the degree of $x$ can be written as
\[
d(x) = \sum\limits_{i\in I}1_{x(j_i)\text{ matched in $M$}}.
\]
Condition on $I$ and on $\{j_i\}_{i\in I}$. 
Note that the distribution of $x(j_i)$ for $i\in I$ is independent and each one is uniformly distributed in $f^{-1}(0)$, so we conclude that $\E[d(x)] = |I|\frac{\eps}{1-p}\leq (\ln 2+0.02)\frac{\eps}{1-p}\log_{1/\alpha} k
\leq (\ln 2+0.02+o(1))\frac{\eps}{1-p} s$. It follows from Chernoff's bound that for all $L\geq 1$,
\[
\Prob{x}{d(x)\geq L\cdot (\ln 2+0.03)\frac{\eps}{1-p} s}
\leq e^{-\Omega(L\log_{1/\alpha} k)} = k^{-\Omega(L)}.
\]
Thus, removing from $H$ all vertices $x\in {f'}^{-1}(0)$ with $d(x)\geq (\ln 2+0.03)\frac{\eps}{1-p} s$, we get that the number of edges removed from $H$ is at most
\[
\sum\limits_{L=1}^{\infty}  2L\cdot (\ln 2+0.03)\frac{\eps}{1-p} s \cdot k^{-\Omega(L)} 2^{nk} 
\leq 
2\frac{\eps}{1-p} s \cdot k^{-\Omega(1)}
2^{nk}
\leq
2^{nk},
\]
where the last inequality holds for sufficiently large $C$. Let $H'$ be the graph $H$ after removal of these vertices.

Note that $H'$ is a bipartite graph, with sides of size at most $2^{nk}$, and the number of edges in it is 
\[
0.99\frac{\eps}{p} s|W| - 2^{nk}
\geq 
\left(\frac{\eps}{p} s \left(\frac{\ln 2}{2}-0.03\right) -1) \right)2^{nk}
\geq  
\left(\frac{\ln 2}{2}-0.04\right) \frac{\eps}{p} s \cdot 2^{nk}.
\]
where we used the fact $C$ is sufficiently large. Also, the degree of each left vertex is at most $1.01\frac{\eps}{p} s$, and the degree of each right vertex is at most $(\ln 2+0.03)\frac{\eps}{1-p} s$. Thus, running the greedy algorithm shows that $H'$ has a matching of size at least $\frac{\left(\frac{\ln 2}{2}-0.03\right)\frac{\eps}{p} s \cdot 2^{nk}}{1.01\eps\max(1/p,1/(1-p)) s}\geq \Omega(2^{nk})$, and we note that each edge of this matching is a violating pair for $f'$. It follows from~\Cref{lem:mono_vs_matching} that 
$\eps(f')\geq \Omega(1)$.\hfill\qedsymbol
\subsection{Proof of~\Cref{thm:main}}
Let $n\in\mathbb{N}$ be large and set $c=\frac{\xi}{\log\log n}$ for sufficiently small absolute constant $\xi>0$. Using~\Cref{thm:CCCPS} we get that 
there is a distribution $\mathcal{D}_{{\sf yes}}$ over $n$-variate monotone functions and a distribution $\mathcal{D}_{{\sf no}}$ over $n$-variate, $\eps_c$-far from monotone functions, such that for any $O(n^{1/2-c})$-query algorithm ${\sf ALG}$,
\[
\card{\Prob{f\sim \mathcal{D}_{{\sf yes}}}{{\sf Alg}(f)}
-\Prob{f\sim \mathcal{D}_{{\sf no}}}{{\sf Alg}(f)}}
\leq 0.1.
\]
Furthermore, $\eps_c = 2^{-\Theta(1/c)} = (\log n)^{-\Theta(\xi)}\geq \frac{1}{\sqrt{\log n}}$, where in the last inequality we used that $\xi$ is sufficiently small.

Take $\delta>0$ from~\Cref{lem:main}, and suppose that there is a $q(n')$-query algorithm ${\sf Alg}'$ that distinguishes between $n'$-variate monotone functions and $n'$-variate $\delta$-far from monotone functions. We now describe an algorithm ${\sf Alg}$ distinguishing monotone functions and $\eps_c$-far from monotone functions. Given oracle access to an $n$-variate function $f$, set $k = 2^{\Theta(1/\eps_c)}$ and use~\Cref{lem:main} to construct an oracle access to a function $f'$. Run the algorithm ${\sf Alg}'$ on $f'$, and accept only if it does.

First, note that the query complexity of ${\sf Alg}'$ is $k \cdot q(k n)$. Second, note that if $f$ is monotone, then $f'$ is monotone, so by the premise of ${\sf Alg}'$ we get that the algorithm accepts with probability at least $2/3$. Third, note that if $f$ is $\eps_c$-far from monotone, then by~\Cref{lem:main} with probability at least $0.99$ we have that $f'$ is $\delta$-far from monotone, and then ${\sf Alg}'$ accepts with probability at most $1/3$. Thus, if $f$ is $\eps_c$-far from monotone, ${\sf Alg}$ accepts with probability at most $1/3 + 0.01$. Overall, we get that 
\[
\card{\Prob{f\sim \mathcal{D}_{{\sf yes}}}{{\sf Alg}(f)}
-\Prob{f\sim \mathcal{D}_{{\sf no}}}{{\sf Alg}(f)}}
\geq \frac{1}{3} - 0.01 > 0.1
\]
By choice of $\mathcal{D}_{{\sf yes}}$ and $\mathcal{D}_{{\sf no}}$ this implies that $kq(kn)\geq \Omega(n^{1/2-c})$. 
Setting $n' = kn = n^{1+o(1)}$ we get that 
$q(n')\geq n'^{1/2-o(1)}$, concluding the proof.
\hfill\qed

\bibliographystyle{plain}
\bibliography{ref}
\end{document}